\documentclass[letterpaper, 10 pt, conference]{ieeeconf}  % Comment this line out if you need a4paper

\IEEEoverridecommandlockouts                              % This command is only needed if 
\usepackage{graphics} % for pdf, bitmapped graphics files
\usepackage{epsfig} % for postscript graphics files
\usepackage{times} % assumes new font selection scheme installed
\usepackage{amsmath} % assumes amsmath package installed
\usepackage{amssymb}  % assumes amsmath package installed
\usepackage{graphicx}
\usepackage{epstopdf}
\usepackage{xcolor}
\newtheorem{definition}{Definition}
\newtheorem{lemma}{Lemma}
\newtheorem{assumption}{Assumption}
\newtheorem{remark}{Remark}
\newtheorem{proposition}{Proposition}
\newtheorem{theorem}{Theorem}

\title{\LARGE \bf
Distributed Safety Critical Control among Uncontrollable Agents Using Reconstructed Control Barrier Functions
}

\author{Yuzhang Peng, Wei Wang, Jiaqi Yan, and Mengze Yu% <-this % stops a space
\thanks{*This work was supported by National Natural Science Foundation of China under Grants 62573023 and 62373019,
and the Fundamental Research Funds for the Central Universities (501RCQD2025103003, 501XSKC2025103001).
}% <-this % stops a space
\thanks{Yuzhang Peng and Mengze Yu are with the School of Automation Science and Electrical Engineering, Beihang University, Beijing 100191, China.
        {\tt\small pengyz@buaa.edu.cn}; {\tt\small yumengze0214@126.com}}%
\thanks{Wei Wang and Jiaqi Yan are with the School of Automation Science and Electrical Engineering, Beihang University, Beijing 100191, China and the Hangzhou Innovation Institute, Beihang University, Hangzhou 310051, China.
        {\tt\small w.wang@buaa.edu.cn}; {\tt\small jqyan@buaa.edu.cn}}%
}

\begin{document}

\maketitle
\thispagestyle{empty}
\pagestyle{empty}

%%%%%%%%%%%%%%%%%%%%%%%%%%%%%%%%%%%%%%%%%%%%%%%%%%%%%%%%%%%%%%%%%%%%%%%%%%%%%%%%
\begin{abstract}

This paper investigates the distributed safety critical control for multi-agent systems (MASs) in the presence of uncontrollable agents with uncertain behaviors. 
To ensure system safety, the control barrier function (CBF) is employed in this paper. 
However, a key challenge is that the CBF constraints are coupled when MASs perform collaborative tasks, which depend on information from multiple agents and impede the design of a fully distributed safe control scheme. To overcome this, a novel reconstructed CBF approach is proposed. In this method, the coupled CBF is reconstructed by leveraging state estimates of other agents obtained from a distributed adaptive observer. 
Furthermore, a prescribed performance adaptive parameter is designed to modify this reconstruction, ensuring that satisfying the reconstructed CBF constraint is sufficient to meet the original coupled one. Based on the reconstructed CBF, we design a safety-critical quadratic programming (QP) controller and prove that the proposed distributed control scheme rigorously guarantees the safety of the MAS, even in the uncertain dynamic environments involving uncontrollable agents. The effectiveness of the proposed method is illustrated through a simulation.

\end{abstract}

%%%%%%%%%%%%%%%%%%%%%%%%%%%%%%%%%%%%%%%%%%%%%%%%%%%%%%%%%%%%%%%%%%%%%%%%%%%%%%%%
\section{INTRODUCTION}
In recent years, safety critical control has garnered significant attention for its applications in fields such as robotics \cite{Cavorsi2024}, autonomous driving \cite{Shen2025}, and aerospace \cite{Yang2025}. Beyond established methods such as artificial potential fields (APFs) \cite{Khatib1986}, prescribed performance control (PPC) \cite{Bechlioulis2008}, and nonovershooting control \cite{Kristic2006}, the use of control barrier functions (CBFs) to ensure system safety represents an emerging research direction. The standard CBF-based approach adopts a quadratic program (QP) as the controller to compute the control inputs that strictly satisfies the CBF-induced safety constraint, thereby guaranteeing system safety \cite{Ames2017}.

Due to the widespread application of multi-agent systems (MAS) in collaborative tasks, ensuring their safety using CBFs has also become an active area of research. The works in \cite{Charitidou2023} and \cite{Wang2017} investigate the cooperative control problem for MASs under safety or task constraints using a CBF-based approach. However, both of these studies adopt a centralized control architecture. A distributed CBF-QP control framework is proposed in \cite{Lindemann2020} to address the control problem under collaborative task constraints. Nevertheless, this approach assumes that the MAS for achieving the collaborative task is fully connected. Therefore, designing distributed safety controllers based on CBFs to address the cooperative control problem still remains a challenge.

The primary challenge in designing distributed CBF-based controllers is that the CBF often involves the states of multiple agents, resulting in a \textit{coupled CBF}. When a standard QP controller is adopted, this coupled CBF induces a \textit{coupled constraint} that involves the states and control inputs of multiple agents. Consequently, it becomes infeasible for an individual agent to compute a control input that satisfies this constraint using only its local information. 

One approach to overcome the issue of coupled CBFs and enable distributed control is the constraint decomposition approach proposed in \cite{Tan2025,Mestres2024}. The core assumption therein is that a coupled constraint can be decomposed into multiple local constraints, each relying solely on information available to the respective agent. Consequently, if the control input of every agent satisfies its respective local constraint, it can be guaranteed that the original coupled constraint is also satisfied. While the constraint decomposition approach allows for the design of fully distributed QP controllers, a drawback is that not all coupled CBF constraints can be transformed into local ones through a simple additive split as used in \cite{Tan2025,Mestres2024}.

Another challenge arises from the presence of \textit{uncontrollable agents} with uncertain behavior in the operating environment of MASs \cite{Brafman1996,Yu2023}. Prominent examples include pedestrians in multi-robot navigation scenarios and human-driven vehicles in autonomous driving contexts. When uncontrollable agents are present in the environment, the constraint decomposition approach proposed in \cite{Tan2025,Mestres2024} becomes inapplicable. This is because the method requires all agents involved in a coupled constraint to satisfy their respective decomposed local constraints. However, we can only design control inputs for \textit{controllable agents}, whereas the inputs of uncontrollable agents are entirely unknown and cannot be prescribed. This makes it impossible to satisfy the local constraints corresponding to the uncontrollable agents, which in turn renders the design of distributed safety controllers more challenging.

\textbf{Contributions:} Motivated by the aforementioned limitations, we propose a novel reconstructed CBF approach to resolve the challenge of coupled CBF constraints in distributed safety critical control. 
\begin{itemize}
  \item By leveraging state estimates from distributed adaptive observers, this method transforms coupled, global CBFs into local ones. We further modify the reconstructed CBFs with a prescribed performance adaptive parameter, which guarantees that satisfying the reconstructed CBF constraints is sufficient to satisfy the original global one.
  \item Prior works, such as \cite{Tan2025,Mestres2024}, require the presupposition that all agents involved in a coupled constraint are controllable. In contrast, the reconstructed CBFs method enables controllable agents to compensate for the uncertain behaviors of uncontrollable ones. This relaxes the requirement for global constraint satisfaction, such that instead of requiring all agents to enforce the coupled CBF constraint, only a subset of controllable agents needs to satisfy their reconstructed local constraints.
\end{itemize}

The remainder of this paper is organized as follows: Section \ref{sec:2} provides the preliminaries and describes the problem. In Section \ref{sec:3}, a distributed safety critical control framework based on reconstructed CBFs is proposed. Section \ref{sec:4} demonstrates the effectiveness of the proposed method through a simulation example. A conclusion is drawn in Section \ref{sec:5}.

\textbf{Notation}: 
Symbol $\mathbb{R}$ represents the real number set. $\mathbb{R}^{n}$ denotes the $n$-dimensional real vector space and $\mathbb{R}^{m \times n}$ denotes the $m \times n$-dimensional real matrix space. 
$I$ represents an identity matrix with an appropriate dimension. Inequalities $X \succ 0$ and $Y \succeq 0$ indicate that matrices $X$ and $Y$ are positive definite and positive semi-definite, respectively. $\mathbf{1}$ and $\mathbf{0}$ refer to vectors of proper dimensions with all entries to be $1$ or $0$, respectively. The Kronecker product of matrices $M\in \mathbb{R}^{m\times n}$ and $N\in \mathbb{R}^{p\times q}$ is denoted by $M\otimes N \in \mathbb{R}^{mp\times nq}$.

\section{PRELIMINARIES AND PROBLEM STATEMENT}
\label{sec:2}

\subsection{Graph Theory}
In this paper, different controllable agents are connected through an undirected graph $\mathcal{G} = \left\{\mathcal{N},\mathcal{E}\right\} $, which consists of agents $\mathcal{N} = \{1,2,\dots,N\}$ and edges $\mathcal{E} = \{(i,j)|i,j\in\mathcal{N}\}$. Since $\mathcal{G}$ is undirected, if the edge $(i,j) \in \mathcal{E}$, then $(j,i) \in \mathcal{E}$. Moreover, $(i,j) \in \mathcal{E}$ indicates that agent $j$ can obtain the state information of agent $i$. In this case, agent $j$ is referred to as a neighbor of agent $i$, and vice versa. Self edge is disallowed in this study, i.e., $(i,i) \notin  \mathcal{E}, \forall i \in \mathcal{N}$. The adjacency matrix $\mathcal{A} = [a_{ij} ] \in \mathbb{R}^{N\times N}$ is given by $a_{ij} = 1$ if $(i, j) \in \mathcal{E}$, and $a_{ij} = 0$, otherwise. The degree matrix is denoted by $\mathcal{D} = \text{diag} \left\{d_1,\dots,d_N \right\} $, where $d_i = \sum_{j=1}^{N} a_{ij}$. The Laplacian matrix $\mathcal{L}$ is provided by $\mathcal{L} = \mathcal{D} - \mathcal{A}$. The undirected graph $\mathcal{G}$ is connected if there is a sequence of connected edges between any agents $i$ and $j$. The set of uncontrollable agents is defined as $\mathcal{V} = \{N+1,\dots,N+V\}$. Similarly, if the controllable agent $i\in \mathcal{N}$ is directly connected with the uncontrollable agent $l \in \mathcal{V}$, which means that agent $i$ is aware of the state information of agent $l$, we have $a_{il} = 1$ and $a_{il} = 0$, otherwise. 
Let $\mathcal{N}^+ := \mathcal{N} \cup \mathcal{V}$ be the set that includes all controllable agents and uncontrollable agents.
For any $i \in\mathcal{N}$ and $j \in\mathcal{N}^+ $, define $\mathcal{H}_j = \mathcal{L} + \mathcal{B}_j$ with $\mathcal{B}_j = \text{diag}\left\{b_{1j},\dots,b_{Nj}\right\} $, where $b_{ij} = a_{ij}$ if $i\neq j$ and $b_{ii} = 1$ otherwise.

\subsection{System Dynamics and Control Barrier Functions}
Consider a class of nonlinear MASs as
\begin{equation}
  \label{eq:sys}
  \dot x_i = f(x_i) + g(x_i)u_i, \, i\in \mathcal{N}^+,
\end{equation}
where $x_i = \left[x_{i,1},\dots,x_{i,n}\right]^T \in \mathbb{R}^n$ and $u_i \in \mathbb{R}^m$ are the state and control input of agent $i$, respectively. The function $f:\mathbb{R}^n\rightarrow \mathbb{R}^n$ is globally Lipschitz continuous with an unknown Lipschitz constant $L_1$, while the function $g:\mathbb{R}^n\rightarrow \mathbb{R}^{n \times m}$ is locally Lipschitz continuous. Let $x = [x_1^T,\dots,x_{N+V}^T]^T$ and $u = [u_1^T,\dots,u_{N+V}^T]^T$ denote the stacked state vector and stacked control input vector of all agents in $\mathcal{N}^+$, respectively. Note that the control input $u_l$ of the uncontrollable agent $l \in \mathcal{V} \subseteq \mathcal{N}^+$ is unknown and cannot be designed.

In the CBF-based standard safety critical control method, the fundamental objective is to constrain the system state within a predefined \textit{safe set}. Given the MAS (\ref{eq:sys}) and a continuously differentiable function $h(x):\mathbb{R}^{(N+V)n} \rightarrow \mathbb{R}$, the safe set $\mathcal{C}$ for (\ref{eq:sys}) is defined as the zero superlevel set of function $h$ as follows:
\begin{align}
  \label{eq:safe_set}
  \mathcal{C}:=\{x\in\mathbb{R}^{(N+V)n} | h(x)\geq0\}.
\end{align}
The interior and boundary of $\mathcal{C}$, respectively denoted by $\mathrm{Int}(\mathcal{C})$ and $\partial\mathcal{C}$, are defined as follows:
\begin{align}
  & \mathrm{Int}(\mathcal{C}):=\{x\in\mathbb{R}^{(N+V)n} | h(x)>0\}, \notag \\
  & \partial\mathcal{C}:=\{x\in\mathbb{R}^{(N+V)n} | h(x)=0\}.
\end{align}
Afterwards, the definition of CBFs is introduced as follows:
\begin{definition}[Zeroing CBFs, ZCBFs \cite{Ames2017}] \! The continuously differentiable function $h(x)$ is a ZCBF if there exists an extended class $\mathcal{K}$ function $\alpha(\cdot)$ for all $x(t) \in \mathcal{C}$, such that
  \begin{align}
    \label{eq:uzcbf}
    \sup_{u\in\mathbb{R}^{(N+V)m}} & \left[\sum_{i=1}^{N+V}\frac{\partial h}{\partial x_i}\left(f(x_i) + g(x_i)u_i\right)\right] \geq -\alpha(h(x)).
  \end{align}
\end{definition}
Then, the following lemma is established for ZCBFs.
\begin{lemma}[\cite{Ames2017} ]
  \label{lem:1} 
  For a given ZCBF $h(x)$ and MAS (\ref{eq:sys}), if the control input satisfies $u\in\mathcal{U}_Z$, where $\mathcal{U}_Z$ is the set of control input satisfying (\ref{eq:uzcbf}), the safe set $\mathcal{C}$ is forward invariant.
\end{lemma}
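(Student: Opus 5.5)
We prove Lemma \ref{lem:1} with a scalar comparison argument along the closed-loop trajectory: show that $h(x(t))$ obeys a differential inequality, then compare it with the solution of the corresponding scalar ODE, which stays nonnegative. Assume the input $u(t)\in\mathcal{U}_Z$ is such that the closed-loop system (\ref{eq:sys}) has a unique solution on its maximal interval of existence $[0,T)$. For instance, take $u$ locally Lipschitz in $x$; with $f$ globally Lipschitz and $g$ locally Lipschitz the vector field is then locally Lipschitz. Assume also $x(0)\in\mathcal{C}$.

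\textbf{Step 1 (differential inequality).} Along this solution define $\eta(t):=h(x(t))$. Since $h$ is continuously differentiable, the chain rule gives
\begin{align*}
\dot\eta(t)=\sum_{i=1}^{N+V}\frac{\partial h}{\partial x_i}\big(f(x_i)+g(x_i)u_i\big).
\end{align*}
Because $u(t)\in\mathcal{U}_Z$, this quantity is bounded below by $-\alpha(h(x(t)))$. Hence
\begin{align*}
\dot\eta(t)\ge-\alpha(\eta(t)), \qquad \eta(0)\ge 0.
\end{align*}

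\textbf{Step 2 (comparison).} Let $y$ solve $\dot y=-\alpha(y)$ with $y(0)=\eta(0)\ge0$. The constant $y\equiv0$ is an equilibrium, since $\alpha(0)=0$. Because $\alpha$ is strictly increasing, $-\alpha(y)<0$ for $y>0$. So a solution that starts nonnegative decreases toward $0$ but never crosses it, and $y(t)\ge0$ for all $t$. The comparison lemma then gives $\eta(t)\ge y(t)\ge 0$ on $[0,T)$, that is, $x(t)\in\mathcal{C}$ for all $t$, which is forward invariance.

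\textbf{Main obstacle.} The comparison step is the delicate one, because $\alpha$ is only an extended class $\mathcal{K}$ function and need not be Lipschitz, so $\dot y=-\alpha(y)$ may have non-unique solutions. The way around this is to avoid uniqueness altogether and argue directly at the boundary, Nagumo-style. Suppose there is a first time $t_1$ with $\eta(t_1)=0$ followed by $\eta<0$ on $(t_1,t_1+\epsilon)$. On that interval $-\alpha(\eta)>0$, so $\dot\eta>0$ there. Then $\eta$ is strictly increasing on $(t_1,t_1+\epsilon)$ and continuous at $t_1$, so $\eta(t)>\eta(t_1)=0$ on that interval, which is a contradiction. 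This works using only the monotonicity of $\alpha$ and $\alpha(0)=0$.

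The same argument rules out $\eta$ becoming negative immediately after $t=0$ in the case $\eta(0)=0$. Therefore $h(x(t))\ge0$ on the whole existence interval, and $\mathcal{C}$ is forward invariant.
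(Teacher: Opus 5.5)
The paper does not prove this lemma itself; it cites Ames et al. The standard argument behind that citation is a boundary (Nagumo-type) one. On $\partial\mathcal{C}$ the constraint gives $\dot h\ge-\alpha(0)=0$, so the closed-loop field is sub-tangential to $\mathcal{C}$, and Nagumo's theorem gives forward invariance.

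Your final argument is a correct and more elementary alternative. Stated cleanly: if $\eta(t^*)<0$ for some $t^*$, put $t_1:=\sup\{s\in[0,t^*]:\eta(s)\ge0\}$. This, rather than ``the first zero of $\eta$'', is the right choice. Then $\eta(t_1)=0$ and $\eta<0$ on $(t_1,t^*]$, where $\dot\eta\ge-\alpha(\eta)>0$, so $\eta(t^*)>0$, a contradiction.

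This route needs no uniqueness of closed-loop solutions, which Nagumo's theorem requires. It applies to any absolutely continuous solution along which the constraint holds, so your Lipschitz assumption on $u$ is superfluous. It also needs no regularity condition $\frac{\partial h}{\partial x}\neq0$ on $\partial\mathcal{C}$, which the Nagumo route uses to identify the tangent cone of $\mathcal{C}$ with $\{v:\frac{\partial h}{\partial x}v\ge0\}$. In fact it only uses $\dot\eta\ge0$ wherever $\eta<0$.

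The price is that it uses the constraint at states with $h(x)<0$, whereas Nagumo uses it only on $\partial\mathcal{C}$. The paper's ZCBF definition imposes the condition only for $x\in\mathcal{C}$. You should therefore say explicitly that $u\in\mathcal{U}_Z$ means the inequality holds at every time along the trajectory, as the QP enforces it. Ames et al.\ get the same effect by defining the ZBF on an open $\mathcal{D}\supset\mathcal{C}$.

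This point is substantive, not cosmetic. With the inequality required only on $\mathcal{C}$ and no regularity, the claim fails. Take $h(x)=x^3$, $\dot x=-1$ and $\alpha(s)=3\,\mathrm{sign}(s)|s|^{2/3}$. Then $\dot h=-\alpha(h)$ on $\mathcal{C}=[0,\infty)$, yet $x(t)=1-t$ leaves $\mathcal{C}$.

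Finally, Step 2 can be dropped. The non-uniqueness you worry about there does not occur in forward time: since $\alpha$ is increasing, any two solutions of $\dot y=-\alpha(y)$ satisfy $\frac{d}{dt}(y_1-y_2)^2=-2(y_1-y_2)\bigl(\alpha(y_1)-\alpha(y_2)\bigr)\le0$.
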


\subsection{Problem Statement}
In research related to CBFs, QP is commonly employed to synthesize a controller that ensures the control input strictly satisfies safety constraints. For system (\ref{eq:sys}), to guarantee that constraint (\ref{eq:uzcbf}) is strictly satisfied, the standard QP-based controller is formulated as follows:
\begin{subequations}
  \label{eq:standardQP}
\begin{align}
  & u = \mathop{\arg \min} \limits_{ {\nu}\in \mathbb{R}^{(N+V)m}} \,
    \frac{1}{2} (\nu - u_{\text{nom}}) ^T W (\nu - u_{\text{nom}})   \\ 
  \label{eq:coupled_constraint}
  & \text{s.t.}  \quad \sum_{i=1}^{N+V}\frac{\partial h}{\partial x_i}\left(f(x_i) + g(x_i)u_i\right) \geq -\alpha(h(x)).
\end{align}
\end{subequations}
The controller (\ref{eq:standardQP}), also referred to as the \textit{safety filter}, is used to modify the nominal control input $u_{\text{nom}}\in \mathbb{R}^{(N+V)m}$, which is designed to achieve a primary control task (e.g. trajectory tracking) without considering safety constraints. The remaining terms, $\nu$ and $W\in \mathbb{R}^{(N+V)m \times (N+V)m}$, are the optimization variable and a positive definite weighted matrix, respectively. According to Lemma \ref{lem:1}, (\ref{eq:standardQP}) is able to strictly ensure the safety of the system while achieving the control task corresponding to $u_{\text{nom}}$ as effectively as possible. However, the presence of the coupled constraint (\ref{eq:uzcbf}) renders the controller (\ref{eq:standardQP}) centralized. Besides, it is not possible to solve for the control input that satisfies (\ref{eq:uzcbf}) using only the information from a single agent. Moreover, due to the presence of uncontrollable agents, $u_i$ for $i\in\mathcal{V}$ cannot be designed, which renders (\ref{eq:standardQP}) inapplicable and exacerbates the difficulty in designing a distributed safety controller.

To overcome the aforementioned challenges, the control objective of this paper is as follows:
\begin{itemize}
  \item Design a distributed QP controller for each controllable agent $i \in \mathcal{N}$, which rigorously guarantees the safety constraints while utilizing only the local information of the agent;
  \item The designed distributed safety controller remains applicable even when the safety constraints involve uncontrollable agents, i.e., $\frac{\partial h}{\partial x_i} \not\equiv \boldsymbol{0}^T$ for $i\in\mathcal{V}$.
\end{itemize}

\begin{remark}
  \label{re:1}
  Although our analysis, as presented in (\ref{eq:coupled_constraint}), focuses on the case where all controllable agents are subject to a single, common coupled CBF constraint, the proposed method can be readily extended to scenarios involving multiple heterogeneous coupled CBF constraints among agents. We demonstrate this capability with a case study in Section \ref{sec:4}.
\end{remark}

To achieve the control objectives, the following assumptions are imposed.

\begin{assumption} 
  \label{ass:1} 
  % \cite{Deng2020}
The undirected graph $\mathcal{G}$ is connected, and the state of uncontrollable agent $l$ for any $l\in\mathcal{V}$ can be obtained by at least one controllable agent.
\end{assumption}
\begin{assumption}
  \label{ass:2} 
  % \cite{Lindemann2019a,Lindemann2019b,Ze2025,Chen2024}
  The control effectiveness $g(x_i)u_i$ is bounded, i.e., $\left\lVert g(x_i)u_i\right\rVert \leq D$ for any $i \in \mathcal{N}^+$, where $D>0$ is an unknown upper bound.
\end{assumption}
\begin{assumption}
  \label{ass:3} 
  % \cite{Lindemann2019a,Lindemann2019b,Ze2025,Chen2024}
  The derivative of the function $h(z)$, i.e., $\frac{d h(z)}{d z}$, is designed to be locally Lipschitz continuous.
\end{assumption}
\begin{assumption}
  \label{ass:4}
  For the CBF $h(x)$, both its function form and its value are known to any controllable agent $i\in\mathcal{N}$.
\end{assumption}
\begin{assumption}
  \label{ass:5} %\cite{Shen2025}
  At the initial time $t=0$, each agent $i$ is in the interior of the safe set $\mathcal{C}$, i.e., $h(x(0)) >0$.
\end{assumption}
\begin{remark}
  Assumption \ref{ass:1} relaxes the fully connected topology required in \cite{Lindemann2020} while maintaining the necessary connectivity for cooperative tasks like consensus \cite{Deng2020}. Assumption \ref{ass:2} standardly bounds control effort \cite{Lindemann2019b,Chen2024}, and Assumption \ref{ass:3} is easily satisfied.
  This work departs from \cite{Tan2025} which assumes that coupled constraints can be decomposed into local constraints known to individual agents. Instead, Assumption \ref{ass:4} posits that an agent must be aware of both the safety objective (the functional form of $h$) and its current safety status (the value of $h$). This is reasonable. For instance, in collision avoidance ($h(x_1,x_2):=\lVert x_1 - x_2\rVert^2-d^2$, where $d>0$ is a safe distance), agents do not need each other's full state; the CBF value (relative distances) obtained via local sensing (e.g., ultra-wideband) are sufficient. Finally, Assumption \ref{ass:5} necessitates initial safety to prevent immediate constraint violations caused by uncertainties \cite{Shen2025}.
\end{remark}

The following lemmas will be utilized in the subsequent sections.
\begin{lemma}[\cite{Godsil2013}]
  \label{lem:2} 
    Under Assumption \ref{ass:1}, it is clear that the non-zero diagonal matrix $\mathcal{B}_j \succeq 0$ for all $j\in\mathcal{N}^+$. Thus, it can be obtained that $\mathcal{H}_j \succ 0$, with the eigenvalues $\lambda_{j,l}$ of $\mathcal{H}_j$ satisfying $0<\lambda_{j,1}\leq\dots\leq\lambda_{j,N}$. 
\end{lemma}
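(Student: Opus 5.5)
The plan is to write $\mathcal{H}_j=\mathcal{L}+\mathcal{B}_j$ and study the quadratic form $z^T\mathcal{H}_j z$ for $z\in\mathbb{R}^N$. Both pieces are positive semidefinite. The Laplacian of an undirected graph satisfies
\begin{align*}
z^T\mathcal{L}z=\tfrac{1}{2}\sum_{i,k\in\mathcal{N}}a_{ik}(z_i-z_k)^2\geq 0,
\end{align*}
and $\mathcal{B}_j$ is diagonal with entries $b_{ij}\in\{0,1\}$, so $\mathcal{B}_j\succeq 0$. Since $\mathcal{H}_j$ is symmetric, it is positive definite exactly when $z^T\mathcal{H}_jz=0$ forces $z=\mathbf{0}$. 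Positive definiteness then gives real eigenvalues that are all strictly positive, which can be ordered as $0<\lambda_{j,1}\leq\dots\leq\lambda_{j,N}$.

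First I check that $\mathcal{B}_j$ is nonzero, using Assumption \ref{ass:1}, by cases on $j$. If $j\in\mathcal{N}$, then $b_{jj}=1$ by definition. If $j\in\mathcal{V}$, Assumption \ref{ass:1} says some controllable agent $i$ observes agent $j$, so $a_{ij}=1$ and hence $b_{ij}=1$. In either case some diagonal entry of $\mathcal{B}_j$ equals $1$.

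Now suppose $z^T\mathcal{H}_jz=0$. Both terms are nonnegative, so each vanishes. From $z^T\mathcal{L}z=0$ we get $z_i=z_k$ for every edge $(i,k)\in\mathcal{E}$. Because $\mathcal{G}$ is connected (Assumption \ref{ass:1}), chaining along paths gives $z=c\mathbf{1}$ for some scalar $c$. Then
\begin{align*}
z^T\mathcal{B}_jz=c^2\sum_{i=1}^N b_{ij}=0,
\end{align*}
and since some $b_{ij}=1$, this forces $c=0$. So $z=\mathbf{0}$ and $\mathcal{H}_j\succ0$.

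The only delicate step is the case $j\in\mathcal{V}$, where the nonzero entry of $\mathcal{B}_j$ must come from the second clause of Assumption \ref{ass:1}. The rest is the standard argument that the kernel of a connected graph's Laplacian is $\mathrm{span}\{\mathbf{1}\}$ (Godsil--Royle).
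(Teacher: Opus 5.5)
Your proof is correct. It is the standard argument the paper relies on by citing Godsil--Royle; the paper gives no proof of its own. The argument combines $\mathcal{L}\succeq 0$ with kernel $\mathrm{span}\{\mathbf{1}\}$ for a connected $\mathcal{G}$ and a nonzero diagonal $\mathcal{B}_j\succeq 0$ to force $\mathcal{H}_j\succ 0$, and your case split on $j\in\mathcal{N}$ versus $j\in\mathcal{V}$ correctly identifies which clause of Assumption~\ref{ass:1} makes $\mathcal{B}_j$ nonzero.
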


\begin{lemma}[\cite{Li2022}]
  \label{lem:3} 
    The following inequality holds for any $\epsilon>0$ and any $\varpi \in \mathbb{R}$:
    \begin{equation}
      |\varpi| \leq \frac{\varpi^2}{\sqrt{\varpi^2 + \epsilon^2}} + \epsilon.
    \end{equation}
\end{lemma}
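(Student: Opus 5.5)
The plan is to prove the inequality by comparing $|\varpi|$ with the fraction $\varpi^2/\sqrt{\varpi^2+\epsilon^2}$ after moving it to the left-hand side. Setting $s := \sqrt{\varpi^2+\epsilon^2}$, which is strictly positive since $\epsilon>0$, the claim is equivalent to
\begin{equation*}
|\varpi| - \frac{\varpi^2}{s} \leq \epsilon .
\end{equation*}
Because $s>0$, multiplying through by $s$ preserves the inequality. Using $\varpi^2 = |\varpi|^2$, the left-hand side equals
\begin{equation*}
\frac{|\varpi|\,s - |\varpi|^2}{s} = \frac{|\varpi|\,(s-|\varpi|)}{s}.
\end{equation*}

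The key step is to rationalize $s-|\varpi|$. Since $s^2 - |\varpi|^2 = \epsilon^2$, we have
\begin{equation*}
s-|\varpi| = \frac{\epsilon^2}{s+|\varpi|},
\end{equation*}
which is nonnegative. The left-hand side therefore becomes
\begin{equation*}
\frac{|\varpi|\,\epsilon^2}{s\,(s+|\varpi|)}.
\end{equation*}
We then bound the factors separately. First, $|\varpi|\le s$, so $|\varpi|/s\le 1$. Second, $s+|\varpi|\ge s\ge\epsilon$, so $\epsilon^2/(s+|\varpi|)\le \epsilon$. Multiplying these two bounds gives the claim. The case $\varpi=0$ is immediate, since the right-hand side is $\epsilon>0$ and the left-hand side is $0$; the general argument also covers it.

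I do not expect any real obstacle. The only point that needs care is keeping the signs right: we need $s>0$ to justify clearing denominators, and we need $s\ge|\varpi|$ so that the difference $s-|\varpi|$ is nonnegative. Both follow at once from $\epsilon>0$.
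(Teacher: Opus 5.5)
Your proof is correct and complete. Writing the gap as $|\varpi|(s-|\varpi|)/s=|\varpi|\epsilon^2/\bigl(s(s+|\varpi|)\bigr)$ and then using $|\varpi|/s\le 1$ and $\epsilon^2/(s+|\varpi|)\le\epsilon$ is valid, because all factors are nonnegative and $s\ge\epsilon>0$.

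The paper gives no proof of its own here; it simply imports the inequality from the cited reference. Your argument is the standard way to establish it. A slightly shorter variant skips the rationalization: $|\varpi|(s-|\varpi|)/s\le s-|\varpi|\le\epsilon$, since $\sqrt{\varpi^2+\epsilon^2}\le|\varpi|+\epsilon$.
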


\section{MAIN RESULTS} 
\label{sec:3}
The distributed control scheme proposed in this paper will be presented in this section.

\subsection{Design of Distributed Adaptive Observer}
Noting that constraint (\ref{eq:coupled_constraint}) incorporates the states and control inputs of multiple agents, the implementation of distributed control becomes challenging due to the incomplete information acquisition by a single agent. To overcome this issue, 
we first design a distributed observer for each agent $i\in \mathcal{N}$, which allows it to estimate the states of the other agents.
\begin{align}
  \label{eq:observer}
  & \dot {\hat x}_{i,l} = f(\hat x_{i,l}) - \hat \delta_{i,l}\xi_{i,l}, \notag  \\
  & \xi_{i,l} = \sum_{j=1}^{N}a_{ij}\left(\hat x_{i,l} - \hat x_{j,l}\right) + b_{il}\left(\hat x_{i,l} - x_l\right), \notag \\
  & \dot {\hat \delta}_{i,l} = 2\xi_{i,l}^T P_l \xi_{i,l} - \sigma_{l} {\hat \delta}_{i,l}, \, \, l=1,\dots,N+V,
\end{align}
where $\hat x_{i,l}$ is the estimate of the state $x_l$ by agent $i$, $P_l \in \mathbb{R}^{n\times n}$ is a symmetric positive definite matrix, $\sigma_l$ is a positive constant. $\hat \delta_{i,l} $ is an adaptive parameter used to estimate the constant $\delta_l$, which satisfies 
\begin{equation}
  \label{eq:obs_LMI}
  2\lambda_{l,N}P_l^2 + \frac{L_1^2}{\lambda_{l,1}}I - 2\lambda_{l,1}\delta_lP_l + \sigma_l P_l \prec 0.
\end{equation}
Afterwards, the following proposition can be established. 
\begin{proposition} \label{pro:1}
  Consider the MAS described by (\ref{eq:sys}) with the distributed adaptive observers (\ref{eq:observer}). Provided that Assumptions \ref{ass:1} and \ref{ass:2} are satisfied, it can be concluded that 

  1) all estimation errors $\tilde x_{i,l} := \hat x_{i,l} - x_{l}$, $\dot {\tilde x}_{i,l}$ and adaptive parameters $\hat \delta_{i,l}$ remain globally uniformly bounded with $ \limsup _{t\rightarrow\infty} ||\tilde x_{i,l}(t)|| \leq \sqrt{\frac{\Xi_l}{\lambda_{l,1}\lambda_{\min}(P_l)\sigma_l}}$, $i \in \mathcal{N}$, $l \in \mathcal{N}^+$, where $\Xi_l:=\frac{\sigma_l}{2} \sum_{i=1}^{N} \delta_{l}^2 + ND^2$;

  2) if the signal $x_l$ for $l\in\mathcal{N}^+$ is bounded, then all signals of the observer (\ref{eq:observer}) are bounded.
\end{proposition}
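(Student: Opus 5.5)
The proof is a Lyapunov argument carried out separately for each observed agent $l\in\mathcal{N}^+$. For that $l$, stack the errors as $\tilde x_l := [\tilde x_{1,l}^T,\dots,\tilde x_{N,l}^T]^T$. The consensus-type term satisfies $\xi_l := [\xi_{1,l}^T,\dots,\xi_{N,l}^T]^T = (\mathcal{H}_l\otimes I)\tilde x_l$, because the terms $x_l$ cancel in the differences $\hat x_{i,l}-\hat x_{j,l}$. Differentiating $\tilde x_{i,l}$ along (\ref{eq:sys}) and (\ref{eq:observer}) gives
\begin{equation*}
\dot{\tilde x}_{i,l} = f(\hat x_{i,l}) - f(x_l) - g(x_l)u_l - \hat\delta_{i,l}\xi_{i,l}.
\end{equation*}
In this expression, $\|f(\hat x_{i,l})-f(x_l)\|\le L_1\|\tilde x_{i,l}\|$ and $\|g(x_l)u_l\|\le D$ by Assumption \ref{ass:2}.

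\textbf{Lyapunov candidate.} I would use
\begin{equation*}
V_l = \xi_l^T(I_N\otimes P_l)\xi_l\ \text{-type weighting} \quad\text{or}\quad V_l=\tilde x_l^T(\mathcal{H}_l\otimes P_l)\tilde x_l + \sum_{i=1}^N \tfrac{1}{2}(\hat\delta_{i,l}-\delta_l)^2,
\end{equation*}
and commit to the second form. By Lemma \ref{lem:2}, $\mathcal{H}_l\succ0$, so $V_l$ is positive definite and satisfies
\begin{equation*}
V_l\ge \lambda_{l,1}\lambda_{\min}(P_l)\|\tilde x_l\|^2.
\end{equation*}
Its derivative is
\begin{equation*}
\dot V_l = 2\xi_l^T(I\otimes P_l)\dot{\tilde x}_l + \sum_i(\hat\delta_{i,l}-\delta_l)\big(2\xi_{i,l}^TP_l\xi_{i,l}-\sigma_l\hat\delta_{i,l}\big).
\end{equation*}
The adaptive term $-2\sum_i\hat\delta_{i,l}\xi_{i,l}^TP_l\xi_{i,l}$ cancels against the update law. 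This leaves $-2\delta_l\,\xi_l^T(I\otimes P_l)\xi_l$, which is the negative term we need.

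\textbf{Bounding the cross terms by Young's inequality.}
\begin{itemize}
\item The Lipschitz term: $2\xi_l^T(I\otimes P_l)F \le \xi_l^T(I\otimes P_l^2)\xi_l + L_1^2\|\tilde x_l\|^2$. Then use $\|\tilde x_l\|^2\le \xi_l^T\xi_l/\lambda_{l,1}^2$, or equivalently bound it through $\lambda_{l,1}$.
\item The disturbance term: $-2\xi_l^T(I\otimes P_l)(\mathbf 1\otimes g u_l)\le \xi_l^T(I\otimes P_l^2)\xi_l + ND^2$.
\item The leakage term: $-\sigma_l(\hat\delta-\delta)\hat\delta\le -\frac{\sigma_l}{2}(\hat\delta-\delta)^2+\frac{\sigma_l}{2}\delta_l^2$.
\end{itemize}

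\textbf{Main obstacle.} The hard step is matching the eigenvalue scalings exactly to the LMI (\ref{eq:obs_LMI}), with its $2\lambda_{l,N}P_l^2$ and $L_1^2/\lambda_{l,1}$ terms. I would first express everything in the $\tilde x_l$ coordinates. There I would use $\mathcal{H}_l^2\preceq\lambda_{l,N}\mathcal{H}_l$ and $\mathcal{H}_l\succeq\lambda_{l,1}I$, so that the quadratic part is bounded by
\begin{equation*}
\tilde x_l^T\Big(\mathcal{H}_l\otimes\big(2\lambda_{l,N}P_l^2+\tfrac{L_1^2}{\lambda_{l,1}}I-2\lambda_{l,1}\delta_lP_l\big)\Big)\tilde x_l .
\end{equation*}
The LMI then gives that this is at most $-\sigma_l\,\tilde x_l^T(\mathcal{H}_l\otimes P_l)\tilde x_l$. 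Combining the pieces yields
\begin{equation*}
\dot V_l\le-\sigma_lV_l+\Xi_l,
\end{equation*}
possibly up to a harmless constant factor, which I would absorb by rescaling $V_l$.

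\textbf{Conclusion for 1).} The comparison lemma gives $V_l(t)\le e^{-\sigma_lt}V_l(0)+\Xi_l/\sigma_l$. This yields global uniform boundedness of $\tilde x_{i,l}$ and $\hat\delta_{i,l}$, and the limsup bound
\begin{equation*}
\|\tilde x_l\|^2\le \Xi_l/(\lambda_{l,1}\lambda_{\min}(P_l)\sigma_l).
\end{equation*}
Since $\xi$, $\hat\delta$, and $\|g u_l\|\le D$ are all bounded, the error dynamics show that $\dot{\tilde x}_{i,l}$ is bounded as well.

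\textbf{Part 2).} If $x_l$ is bounded, then $\hat x_{i,l}=\tilde x_{i,l}+x_l$ is bounded. Then $\xi_{i,l}$ is bounded, and by global Lipschitzness $f(\hat x_{i,l})$ is bounded. Hence $\dot{\hat x}_{i,l}$ and $\dot{\hat\delta}_{i,l}$ are bounded, so all observer signals are bounded.
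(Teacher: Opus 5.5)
Your proof is correct and follows the paper's argument essentially step for step. You use the same Lyapunov function $V_l=\tilde x_l^T(\mathcal{H}_l\otimes P_l)\tilde x_l+\frac{1}{2}\sum_i(\hat\delta_{i,l}-\delta_l)^2$ and the same Young-inequality splitting of the Lipschitz and disturbance cross terms, then apply $\mathcal{H}_l^2\preceq\lambda_{l,N}\mathcal{H}_l$, $\mathcal{H}_l^2\succeq\lambda_{l,1}\mathcal{H}_l$ and $\mathcal{H}_l\succeq\lambda_{l,1}I$ to reach the LMI (\ref{eq:obs_LMI}) before the same comparison argument. Your hedge about rescaling $V_l$ is unnecessary, since your bounds already combine exactly to $\dot V_l\le-\sigma_lV_l+\Xi_l$. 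It would also be harmful, because rescaling would change the constant in the stated $\limsup$ bound.
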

\begin{proof}
  The time derivative of $\tilde x_{i,l}$ is calculated as
  \begin{equation}
    \label{eq:dtilde x_{i,l}}
    \dot {\tilde{x}}_{i,l} = \tilde f_{i,l} - d_l - \hat \delta_{i,l}\xi_{i,l},
  \end{equation}
  where $\tilde f_{i,l} := f(\hat x_{i,l}) - f(x_{l})$ and $d_l := g({x}_{l})u_l$. Consequently, define $\tilde{x}_l := \left[\tilde x_{1,l}^T,\dots,\tilde x_{N,l}^T\right]^T $. Consider the following Lyapunov function 
  \begin{equation}
    \label{eq:V_l}
    V_l = \tilde{x}_l^T\left(\mathcal{H}_l\otimes P_l\right)\tilde{x}_l + \frac{1}{2}\sum_{i=1}^{N}\tilde \delta_{i,l}^2 ,
  \end{equation}
  where $\tilde \delta_{i,l} = \delta_l - \hat \delta_{i,l}$. It can be derived that 
  \begin{equation}
    \label{eq:dtilde x_l}
    \dot{\tilde x}_l = \tilde F_l - \boldsymbol{1}\otimes d_l - \hat{\delta}_l\left(\mathcal{H}_l \otimes I\right) \tilde{x}_l,
  \end{equation}
  where $\tilde{F}_l = \left[\tilde f_{1,l}^T, \dots,\tilde f_{N,l}^T\right]^T $ and $\hat \delta_l = \mathrm{diag}\{ \underbrace{\hat{\delta}_{1,l}, \dots ,\hat{\delta}_{1,l},} \limits_{n}\underbrace{\hat{\delta}_{2,l},\dots ,\hat{\delta}_{2,l},}\limits_{n} \dots \underbrace{,\hat{\delta}_{N,l}, \dots ,\hat{\delta}_{N,l}}\limits_{n} \}_{Nn \times Nn}$. Thus, the derivative of $V_l$ along (\ref{eq:dtilde x_l}) is obtained as
  \begin{align}
    \label{eq:dV_l}
    \dot V_l = & 2\tilde{x}_l^T\left(\mathcal{H}_l\otimes P_l\right)\tilde{F}_l - 2\tilde{x}_l^T\left(\mathcal{H}_l\otimes P_l\right)\left(\boldsymbol{1}\otimes d_l\right) \notag \\
    & - 2\tilde{x}_l^T\left(\mathcal{H}_l\otimes P_l\right)\hat{\delta}_l\left(\mathcal{H}_l \otimes I\right) \tilde{x}_l - \sum_{i=1}^{N}\tilde \delta_{i,l}\dot {\hat \delta}_{i,l} \notag \\
    \leq & 2\tilde{x}_l^T\left(\mathcal{H}_l^2\otimes P_l^2\right)\tilde{x}_l + ||\tilde{F}_l||^2 +||\boldsymbol{1}\otimes d_l||^2 \notag \\
    & - 2\sum_{i=1}^{N}\hat \delta_{i,l}\xi_{i,l}^TP_l\xi_{i,l} - \sum_{i=1}^{N}\tilde \delta_{i,l} \dot {\hat \delta}_{i,l} .
  \end{align}
  Based on the global Lipschitz continuity of the function $f(\cdot)$ and Assumption \ref{ass:4}, it can be deduced that
  \begin{align}
    \label{eq:tilde F_l}
    ||\tilde{F}_l||^2 = \sum_{i=1}^{N}||\tilde{f}_{i,l}||^2\leq L_1^2||\tilde{x}_{l}||^2, \, ||\boldsymbol{1}\otimes d_l||^2\leq ND^2.
  \end{align}
  Substituting (\ref{eq:tilde F_l}) into (\ref{eq:dV_l}) and using Lemma \ref{lem:2} yields:
  \begin{align}
    \label{eq:dV_l2}
    \dot V_l 
    \leq &2\tilde{x}_l^T\left(\mathcal{H}_l^2\otimes P_l^2\right)\tilde{x}_l + L_1^2\tilde{x}_{l}^T\tilde{x}_{l} + ND^2 \notag \\
    & - 2\sum_{i=1}^{N}\left(\delta_l-\tilde \delta_{i,l}\right) \xi_{i,l}^TP_l\xi_{i,l} - \sum_{i=1}^{N}\tilde \delta_{i,l}\dot {\hat \delta}_{i,l} \notag \\
    % \leq & 2\lambda_{l,N}\tilde{x}_l^T\left(\mathcal{H}_l\otimes P_l^2\right)\tilde{x}_l + \frac{L_1^2}{\lambda_{l,1}}\tilde{x}_l^T\left(\mathcal{H}_l\otimes I\right)\tilde{x}_l \notag \\
    % & -2\delta_l\tilde{x}_l^T\left(\mathcal{H}_l^2\otimes P_l\right)\tilde{x}_l + \sigma_l\sum_{i=1}^{N} \tilde \delta_{i,l} \hat \delta_{i,l} + ND^2 \notag \\
    \leq & 2\lambda_{l,N}\tilde{x}_l^T\left(\mathcal{H}_l\otimes P_l^2\right)\tilde{x}_l + \frac{L_1^2}{\lambda_{l,1}}\tilde{x}_l^T\left(\mathcal{H}_l\otimes I\right)\tilde{x}_l \notag \\
    & -2 \lambda_{l,1}\delta_l\tilde{x}_l^T\left(\mathcal{H}_l\otimes P_l\right)\tilde{x}_l + \sigma_l\sum_{i=1}^{N} \tilde \delta_{i,l} \hat \delta_{i,l} + ND^2 .
  \end{align}
  Combining (\ref{eq:obs_LMI}) and  (\ref{eq:dV_l2}), it is deduced that 
  \begin{align}
    \label{eq:dV_l3}
    \dot V_l \leq & - \sigma_l \tilde{x}_l^T\left(\mathcal{H}_l\otimes P_l\right) \tilde{x}_l - \frac{\sigma_l}{2} \sum_{i=1}^{N} \tilde \delta_{i,l}^2 + \frac{\sigma_l}{2} \sum_{i=1}^{N}  \delta_{l}^2 + ND^2 \notag \\
    & = -\sigma_l V_l + \Xi_l .
  \end{align}
  Performing integration on both sides of (\ref{eq:dV_l3}) yields that 
  \begin{align}
    \label{eq:dV_l4}
    V_l(t) \leq \left(V_l(0) - \frac{\Xi_l}{\sigma_l}\right)\mathrm{e}^{-\sigma_l t} + \frac{\Xi_l}{\sigma_l}.
  \end{align}
  From (\ref{eq:V_l}) and (\ref{eq:dV_l4}), it can be concluded that $V_l$, $\tilde x_{i,l}$ and $\tilde \delta_{i,l}$ for $i\in \mathcal{N}, l\in\mathcal{N}^+$ remain bounded. Moreover, $\hat \delta_{i,l}$ and $\xi_{i,l}$ are also bounded. From \eqref{eq:dtilde x_l} and \eqref{eq:tilde F_l}, all $\dot {\tilde x}_{i,l}$ remain bounded. According to (\ref{eq:dV_l4}), it also can be obtain that 
  \begin{equation}
    \limsup _{t\rightarrow\infty} ||\tilde x_{i,l}(t)|| \leq \sqrt{\frac{\Xi_l}{\lambda_{l,1}\lambda_{\min}(P_l)\sigma_l}}.
  \end{equation}
  
  Furthermore, if $x_l$ is bounded, then, based on the previous analysis, it is evident that all the signals of the observer (\ref{eq:observer}) are bounded. In fact, under Assumption \ref{ass:2}, $x_l$ for any $l\in\mathcal{N}^+$ is guaranteed to remain bounded over any finite time horizon. Therefore, without loss of generality, it is assumed that system states are confined within a sufficiently large compact set, thereby ensuring the boundedness of closed-loop signals.
\end{proof}

\subsection{Reconstructed Control Barrier Functions}
Based on the state estimate from the observer (\ref{eq:observer}), the coupled CBF $h(x)$ is then reconstructed for each agent $i\in\mathcal{N}$ as follows:
\begin{align}
  \label{eq:rcbf}
  & \hat h_i(\hat \upsilon_i, \vartheta _i ) = h(\hat \upsilon_i) - \vartheta _i, \notag \\
  & \dot \vartheta _i = \begin{aligned}[t] % 使用[t]选项，使得等号与aligned环境的第一行对齐
      & - c_i \frac{e_i (\rho_i - e_i)}{\rho_i}\varepsilon_i - \frac{\varrho e_i}{\rho_i}(\rho_{0}^i - \rho_{\infty}^i)\mathrm{e}^{-\varrho t} \\
      & - \frac{\rho_i \varepsilon_i}{4e_i(\rho_i - e_i)} - \frac{\hat r_i^2 \left(\left\lVert \frac{d h(\hat \upsilon_i)}{d \hat \upsilon_i}\right\rVert + \left\lVert \dot {\hat {\bar x}}_i \right\rVert \right) \chi_i}{\sqrt{\chi_i ^2\hat r_i^2 + \epsilon_i^2}} , 
    \end{aligned} \notag \\
  & \dot {\hat r}_i = \frac{\gamma_i |\varepsilon_i| \rho_i}{2e_i(\rho_i-e_i)}\left(\left\lVert \frac{d h(\hat \upsilon_i)}{d \hat \upsilon_i}\right\rVert + \left\lVert \dot {\hat {\bar x}}_i \right\rVert \right) - \varsigma_i \hat r_i, 
\end{align}
where $\hat \upsilon_i$ is the vector obtained by replacing the element $x_j$ ($j \neq i, j \in \mathcal{N}^+$) in the vector $x$ with $\hat x_{i,j}$, $\hat {\bar x}_i:= \left[\hat x_{i,1}^T,\dots,\hat x_{i,N+V}^T\right]^T $ and $\chi_i: = \frac{\varepsilon_i\rho_i}{2e_i (\rho_i-e_i)}\left(\left\lVert \frac{d h(\hat \upsilon_i)}{d \hat \upsilon_i}\right\rVert + \left\lVert \dot {\hat {\bar x}}_i \right\rVert \right) $.
The parameters $c_i$, $\varrho$, $\epsilon_i$, $\varsigma_i$ and $\gamma_i$ are positive constants.
Moreover, $\hat h_i$ is the reconstructed CBF, $\vartheta_i$ and $\hat r_i$ are adaptive parameters, $e_i$ is the reconstruction error, given by 
\begin{equation}
  \label{eq:e}
   e_i:= h(x) - \hat h_i(\hat \upsilon_i, \vartheta _i ),
\end{equation}
and $\varepsilon_i$ and $\rho_i$ are respectively defined as 
\begin{align*}
  \varepsilon_i = \frac{1}{2} \ln\left(\frac{e_i}{\rho_i - e_i}\right), \, \rho_i(t) = \left(\rho_{0}^i - \rho_{\infty}^i \right) \mathrm{e}^{-\varrho t} + \rho_{\infty}^i.
\end{align*}
It should be noted that $\vartheta_i$ is an adaptive parameter designed based on PPC theory. Its role is to modify the function $h(\hat \upsilon_i)$ in (\ref{eq:rcbf}) to ensure that when the reconstructed CBF constraint $\hat h_i \geq 0$ is satisfied, the original coupled CBF constraint $h\geq 0$ also holds. This property will be formally proven in the subsequent theorem.
The initial value of $\vartheta_i$, $\rho_{0}^i$ and $\rho_{\infty}^i$ are chosen such that $0<\rho_{\infty}^i<\rho_{0}^i$, $0<e_i(0)<\rho_{0}^i$ and $\hat h_i (\hat \upsilon_i(0), \vartheta _i(0) ) \geq 0$.

Then, regarding the reconstructed CBF (\ref{eq:rcbf}), the following result can be established.
\begin{theorem}
  \label{the:1}
  Considering reconstructed CBF (\ref{eq:rcbf}), under Assumption \ref{ass:1}-\ref{ass:5}, it can be concluded that:
  
  1) The reconstruction error $e_i$ (\ref{eq:e}) satisfies the prescribed performance constraint, i.e., $0<e_i(t)<\rho_i(t)$;

  2) If $\hat h_i \geq 0$ holds for at least one $i \in \mathcal{N}$, then it follows that $h(x)>0$.
\end{theorem}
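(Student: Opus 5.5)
Part 2) follows almost immediately from part 1), so the work is in part 1). I will prove it with a barrier Lyapunov argument in the transformed error $\varepsilon_i=\frac{1}{2}\ln\bigl(e_i/(\rho_i-e_i)\bigr)$. At $t=0$ the initialization gives $0<e_i(0)<\rho_0^i$, so $\varepsilon_i(0)$ is finite. On the maximal interval $[0,\tau_{\max})$ on which the closed-loop solution exists with $0<e_i<\rho_i$, $\varepsilon_i$ is well defined.

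The first step is to compute the error dynamics. From $e_i=h(x)-h(\hat\upsilon_i)+\vartheta_i$,
\[
\dot e_i=\frac{dh(x)}{dx}\dot x-\frac{dh(\hat\upsilon_i)}{d\hat\upsilon_i}\dot{\hat\upsilon}_i+\dot\vartheta_i .
\]
Next I collect the unknown part $\dot h(x)-\dot h(\hat\upsilon_i)$. Using Assumption~\ref{ass:3}, Assumption~\ref{ass:2} (bounded $g u$), the global Lipschitz property of $f$, and Proposition~\ref{pro:1} (bounded $\tilde x_{i,l}$ and $\dot{\tilde x}_{i,l}$, with states in a compact set), I bound this term as
\[
\bigl|\dot h(x)-\dot h(\hat\upsilon_i)\bigr|\le r_i\Bigl(\bigl\lVert \tfrac{dh(\hat\upsilon_i)}{d\hat\upsilon_i}\bigr\rVert+\lVert\dot{\hat{\bar x}}_i\rVert\Bigr)
\]
for some unknown constant $r_i$. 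The bound is derived by writing $\dot x_j=\dot{\hat x}_{i,j}-\dot{\tilde x}_{i,j}$ and applying local Lipschitz continuity of $dh$ on the compact set, possibly after enlarging the bracket by a constant absorbed into $r_i$. Then
\[
\dot\varepsilon_i=\frac{\rho_i}{2e_i(\rho_i-e_i)}\Bigl(\dot e_i-\frac{e_i\dot\rho_i}{\rho_i}\Bigr),
\]
and $\dot\rho_i=-\varrho(\rho_0^i-\rho_\infty^i)e^{-\varrho t}$. Substituting the $\dot\vartheta_i$ law, the $\dot\rho_i$ compensation term cancels exactly.

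The second step is to take the Lyapunov candidate $V_i=\frac12\varepsilon_i^2+\frac{1}{2\gamma_i}\tilde r_i^2$ with $\tilde r_i=r_i^2-\hat r_i$, so the adaptive law estimates $r_i^2$. The term $-c_i\frac{e_i(\rho_i-e_i)}{\rho_i}\varepsilon_i$ multiplied by $\frac{\rho_i}{2e_i(\rho_i-e_i)}\varepsilon_i$ gives $-\frac{c_i}{2}\varepsilon_i^2$. The term $-\frac{\rho_i\varepsilon_i}{4e_i(\rho_i-e_i)}$ produces $-\frac{\rho_i^2\varepsilon_i^2}{8e_i^2(\rho_i-e_i)^2}$, which I use with Young's inequality to absorb the cross terms. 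For the unknown term I use $|\varepsilon_i|\frac{\rho_i}{2e_i(\rho_i-e_i)}r_i(\cdots)=r_i|\chi_i|$. I then apply a Young-type split and Lemma~\ref{lem:3} with $\varpi=\hat r_i\chi_i$, where the robust term $\hat r_i^2(\cdots)\chi_i/\sqrt{\chi_i^2\hat r_i^2+\epsilon_i^2}$ cancels the estimate part and $\dot{\hat r}_i$ cancels the $\tilde r_i$ cross term. The leakage $-\varsigma_i\hat r_i$ yields $-\frac{\varsigma_i}{2\gamma_i}\tilde r_i^2+\text{const}$. The aim is $\dot V_i\le-\kappa_iV_i+C_i$, hence $V_i$ and $\varepsilon_i$ are bounded on $[0,\tau_{\max})$. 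A bounded $\varepsilon_i$ means $e_i$ stays strictly inside $(0,\rho_i)$ away from the boundaries. A standard extension argument then gives $\tau_{\max}=\infty$, which proves 1).

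I expect the main obstacle to be the bounding in the first step together with the handling of the $r_i$ cross term. Making $\dot h(x)-\dot h(\hat\upsilon_i)$ dominated by the chosen regressor $\lVert dh(\hat\upsilon_i)/d\hat\upsilon_i\rVert+\lVert\dot{\hat{\bar x}}_i\rVert$ with a single constant requires the compactness assumption from Proposition~\ref{pro:1}. Closing the square between $r_i|\chi_i|$ and $\hat r_i$ (estimating $r_i^2$ rather than $r_i$) needs careful use of Lemma~\ref{lem:3}. If that fails, I would fall back to $r_i|\chi_i|\le r_i^2\chi_i^2/\ldots$ style bounds with the $\frac{\rho_i\varepsilon_i}{4e_i(\rho_i-e_i)}$ term absorbing the residual.

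For part 2), suppose $\hat h_i\ge0$ for some $i$. By 1), $e_i>0$, so $h(x)=\hat h_i+e_i>\hat h_i\ge0$, hence $h(x)>0$.
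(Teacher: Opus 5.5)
Your proposal is essentially the paper's proof: the same transformed error $\varepsilon_i$, the same $V_i=\frac{1}{2}\varepsilon_i^2+\frac{1}{2\gamma_i}\tilde r_i^2$, exact cancellation of the $\dot\rho_i$ term, Lemma~\ref{lem:3} with $\varpi=\hat r_i\chi_i$, and leakage giving $\dot V_i\le-\zeta_iV_i+\Omega_i$ before a hypothetical first exit; part 2) is argued identically. Two repairs are needed in the first step. The purely multiplicative bound $|\dot h(x)-\dot h(\hat\upsilon_i)|\le r_i\bigl(\lVert dh(\hat\upsilon_i)/d\hat\upsilon_i\rVert+\lVert\dot{\hat{\bar x}}_i\rVert\bigr)$ fails when that regressor vanishes, so split off a bounded additive residual (the paper does this via the reverse triangle inequality) and absorb it by Young's inequality into the $-\frac{1}{2}\bigl(\frac{\rho_i\varepsilon_i}{2e_i(\rho_i-e_i)}\bigr)^2$ term. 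Also take $\tilde r_i=r_i-\hat r_i$ (or assume $r_i\ge1$) so that $\dot{\hat r}_i$ cancels the $\tilde r_i|\chi_i|$ cross term exactly.
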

\begin{proof}
1) Using proof by contradiction, we assume that $e_i$ first exceeds the performance bounds $(0,\rho_i)$ at some finite time $\bar t$, which indicates that $\lim_{t \to \bar t}\varepsilon _i(t) = \infty$. Hence, for $t \in [0,\bar t)$, we have $e_i \in (0,\rho_i)$. The following Lyapunov function is selected as 
\begin{equation}
  \label{eq:V_rcbf}
  V_i = \frac{1}{2}\varepsilon_i^2 + \frac{1}{2 \gamma_i } \tilde{r}_i^2,
\end{equation}
where $\tilde{r}_i = r_i - \hat r_i$. Positive constant $r_i$ satisfies a certain condition to be introduced later. 
Taking the derivative of $V_{i}$ over the interval $[0,\bar t)$ yields
\begin{align}
  \label{eq:dV_1}
  \dot V_i 
  % = & \varepsilon _i\left(\frac{\partial \varepsilon _i}{\partial \rho_i} \dot \rho_i + \frac{\partial \varepsilon _i}{\partial e_i} \dot e_i \right) - \frac{1}{\gamma_i}\tilde r_i \dot{\hat r}_i\notag \\
  = & \varepsilon _i\left[\frac{-1}{2 \left(\rho_i - e_i\right) } \dot \rho_i + \frac{\rho_i}{2 e_i\left(\rho_i - e_i\right) } \right. \notag \\
  &  \left. \times \left(\dot \vartheta_i - \frac{d h(\hat \upsilon _i)}{d \hat \upsilon _i} \dot {\hat \upsilon }_i + \frac{d h(x)}{d x} \dot x \right) \right] - \frac{1}{\gamma_i}\tilde r_i \dot{\hat r}_i \notag \\
  = & \varepsilon _i\left(\frac{\rho_i}{2 e_i\left(\rho_i - e_i\right) }\dot \vartheta_i - \frac{1}{2 \left(\rho_i - e_i\right) } \dot \rho_i\right) \notag \\
  & + \frac{\varepsilon _i\rho_i}{2 e_i\left(\rho_i - e_i\right)}\left[\frac{d h(x)}{d x} \left(\dot x - \dot {\hat \upsilon }_i\right) \right . \notag \\
  & \left. + \left(\frac{d h(x)}{d x} - \frac{d h(\hat \upsilon _i)}{d \hat \upsilon _i}\right)  \dot {\hat \upsilon }_i\right] - \frac{1}{\gamma_i}\tilde r_i \dot{\hat r}_i.
\end{align}
Combining Proposition \ref{pro:1} and Assumption \ref{ass:3}, it is deduced that 
\begin{align}
  \label{eq:bd1}
  & ||\dot x - \dot {\hat \upsilon }_i|| \leq \Delta_{i,1}, \\
  \label{eq:bd2}
  & \left\lVert \frac{d h(x)}{d x} - \frac{d h(\hat \upsilon _i)}{d \hat \upsilon _i}\right\rVert \leq L_2 ||x - {\hat \upsilon }_i|| \leq \Delta_{i,2}, 
\end{align}
where $L_2$ is a Lipschitz constant of $\frac{d h(z)}{d z} $, $\Delta_{i,1}$ and $\Delta_{i,2}$ are positive constants. Defining $r_i:= \max\{\Delta_{i,1},\Delta_{i,2}\}$ and applying Lemma \ref{lem:3}, it can be obtained that 
\begin{align}
  \label{eq:dV_2}
  \dot V_i 
  \leq & \varepsilon _i\left(\frac{\rho_i}{2 e_i\left(\rho_i - e_i\right) }\dot \vartheta_i - \frac{1}{2 \left(\rho_i - e_i\right) } \dot \rho_i\right) \notag \\
  & + \frac{|\varepsilon _i | \rho_i}{2 e_i\left(\rho_i - e_i\right)}\left(\left\lVert \frac{d h(x)}{d x}\right\rVert + \left\lVert \dot {\hat \upsilon }_i\right\rVert \right) r_i - \frac{1}{\gamma_i}\tilde r_i \dot{\hat r} \notag \\
  \leq &  \frac{\rho_i \varepsilon_i}{2 e_i\left(\rho_i - e_i\right) } \dot \vartheta_i - \frac{\varepsilon _i}{2 \left(\rho_i - e_i\right) } \dot \rho_i  + \frac{|\varepsilon_i| \rho_i}{2e_i(\rho_i-e_i)} \notag \\
  & \times \left(\left\lVert \frac{d h(x)}{d x}\right\rVert + \left\lVert \dot {\hat \upsilon }_i\right\rVert - \left\lVert  \frac{d h(\hat \upsilon_i)}{d \hat \upsilon_i}\right\rVert - \left\lVert \dot {\hat {\bar x}}_i \right\rVert\right)   r_i \notag \\
  & + \frac{|\varepsilon_i| \rho_i \tilde{r}_i}{2e_i(\rho_i-e_i)}\left(\left\lVert \frac{d h(\hat \upsilon_i)}{d \hat \upsilon_i}\right\rVert + \left\lVert \dot {\hat {\bar x}}_i \right\rVert \right) \notag \\
  & +  \frac{\chi_i ^2\hat r_i^2 }{\sqrt{\chi_i ^2\hat r_i^2 + \epsilon_i^2}} + \epsilon_i - \frac{1}{\gamma_i}\tilde r_i \dot{\hat r}_i.
\end{align}
Applying the reverse triangle inequality yields that
\begin{align}
  \label{eq:reverse triangle inequality}
  & \frac{|\varepsilon _i | \rho_i}{2 e_i\left(\rho_i - e_i\right)} \left(\left\lVert \frac{d h(x)}{d x}\right\rVert + \left\lVert \dot {\hat \upsilon }_i\right\rVert - \left\lVert  \frac{d h(\hat \upsilon_i)}{d \hat \upsilon_i}\right\rVert - \left\lVert \dot {\hat {\bar x}}_i \right\rVert\right)   r_i  \notag \\
  & \leq \frac{|\varepsilon _i | \rho_i}{2 e_i\left(\rho_i - e_i\right)} \left(\left\lVert \frac{d h(x)}{d x} - \frac{d h(\hat \upsilon_i)}{d \hat \upsilon_i} \right\rVert + \left\lVert \dot {\hat \upsilon }_i - \dot {\hat {\bar x}}_i\right\rVert \right) r_i  \notag \\
  & \leq \frac{1}{2}\left(\frac{|\varepsilon _i | \rho_i}{2 e_i\left(\rho_i - e_i\right)}\right)^2 + \frac{1}{2}\Delta_i^2,
\end{align}
where $\Delta_i \geq \left(\left\lVert \frac{d h(x)}{d x} - \frac{d h(\hat \upsilon_i)}{d \hat \upsilon_i} \right\rVert + \left\lVert \dot {\hat \upsilon }_i - \dot {\hat {\bar x}}_i\right\rVert \right) r_i$ is a positive constant. Substituting \eqref{eq:rcbf}, (\ref{eq:reverse triangle inequality}) into (\ref{eq:dV_2}) and using Young inequality yields that
\begin{align}
  \label{eq:dV_3}
  \dot V_i \leq & -\frac{c_i}{2} \varepsilon_i^2 +  \frac{\varsigma_i}{\gamma_i} \tilde{r}_i \hat r_i + \frac{1}{2}\Delta_i^2 + \epsilon_i \notag \\
  \leq &  -\frac{c_i}{2} \varepsilon_i^2  -\frac{\varsigma_i}{2\gamma_i} \tilde{r}_i^2 + \frac{\varsigma_i}{2\gamma_i} {r}_i^2 + \frac{1}{2}\Delta_i^2 + \epsilon_i \notag \\
  \leq & - \zeta _i V_i + \Omega_i,
\end{align}
where $\zeta _i = \min\{c_i,\varsigma_i\}$ and $\Omega_i: = \frac{\varsigma_i}{2\gamma_i} {r}_i^2 + \frac{1}{2}\Delta_i^2 + \epsilon_i$ is a positive constant. Integrating both sides of (\ref{eq:dV_3}), it is obtained that 
\begin{equation}
  \label{eq:V_i(t)}
  V_i(t) \leq \left(V_i(0) - \frac{\Omega_i}{\zeta_i}\right) \mathrm{e}^{-\zeta_i t} +  \frac{\Omega_i}{\zeta_i}.
\end{equation}
From (\ref{eq:V_i(t)}), it can be known that $V_i(\bar t^-)$ is bounded, i.e., $\varepsilon_i(\bar t^-)$ is bounded, which is in contradiction with the assumption $\lim_{t \to \bar t}\varepsilon _i(t) = \infty$. Therefore, for $t\in[0,+\infty)$, $\varepsilon_i$ and $\tilde r_i$ remain bounded, i.e., reconstruction error $e_i$ satisfy the prescribed performance criteria.

2) Since for $t\in[0,+\infty)$, we have $e_i > 0$, which implies that $h(x) > \hat h_i(\hat \upsilon_i,\vartheta_i)$ for any $i \in \mathcal{N}$ holds. Thus, if $\hat h_i(\hat \upsilon_i,\vartheta_i) \geq 0$ holds for at least one $i \in \mathcal{N}$, it is deduced that $h(x)>0$ holds.
\end{proof} 

Theorem \ref{the:1}.2) stems from the fact clarified in Remark \ref{re:1}: all $\hat h_i$, $i\in\mathcal{N}$, are reconstructions of the original coupled CBF $h$. Moreover, for scenarios with multiple heterogeneous coupled CBFs, the non-negativity of each original CBF would need to be individually guaranteed by its respective reconstructed CBF remaining non-negative, which can be easily extended from our approach by having each agent reconstruct its own original coupled CBF via (\ref{eq:rcbf}).

\subsection{Design of Distributed Safety Controller}
To guarantee the overall safety of MAS (\ref{eq:sys}), based on the distributed observer (\ref{eq:observer}) and the reconstructed CBF (\ref{eq:rcbf}), a distributed safety controller for agent $i \in \mathcal{N}$ is designed as follows:
\begin{subequations}
  \label{eq:rcbfQP}
\begin{align}
  & u_i = \mathop{\arg \min} \limits_{ {\nu_i}\in \mathbb{R}^m} \,
    \frac{1}{2} (\nu_i - u_{i,\text{nom}}) ^T W_i (\nu_i - u_{i,\text{nom}})   \\ 
  \label{eq:rcbf_constraint}
  &\text{s.t.}  \quad \frac{\partial \hat h_i}{\partial x_i}(f(x_i)+g(x_i)u_i) + \sum_{l \in \mathcal{N}^+, l \neq i} \frac{\partial \hat h_i}{\partial \hat {x}_{i,l}} \dot {\hat {x}}_{i,l} - \dot \vartheta_i  \notag \\
    & \phantom{\text{s.t.}} \quad \geq - \alpha_i(\hat h_i),
\end{align}
\end{subequations}
where $\nu_i$ represents the optimization variable, $u_{i,\text{nom}}$ is the nominal control input of agent $i$, $W_i\in \mathbb{R}^{m \times m}$ is a positive definite weighted matrix, and $\alpha_i(\cdot)$ is an extended class $\mathcal{K}$ function. Then, we have the following theorem.
\begin{theorem}
  \label{the:2}
  Consider the MAS modeled by (\ref{eq:sys}) under Assumption \ref{ass:1}-\ref{ass:5}. With the proposed distributed control scheme (\ref{eq:observer}), (\ref{eq:rcbf}) and (\ref{eq:rcbfQP}), the safe set $\mathcal{C}$ is forward invariant, i.e., the safety of all agents in $\mathcal{N}$ is guaranteed.
\end{theorem}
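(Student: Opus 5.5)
The argument reduces Theorem \ref{the:2} to Theorem \ref{the:1}.2): it suffices to show that, for at least one (in fact every) controllable agent $i\in\mathcal{N}$, the reconstructed CBF $\hat h_i(\hat\upsilon_i,\vartheta_i)$ stays nonnegative for all $t\geq 0$. Once this holds, Theorem \ref{the:1}.2) gives $h(x(t))>0$ for all $t$, so $x(t)\in\mathcal{C}$ and the safe set is forward invariant.

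First fix $i\in\mathcal{N}$ and compute $\frac{d}{dt}\hat h_i$ along the closed loop. Since $\hat h_i=h(\hat\upsilon_i)-\vartheta_i$ and $\hat\upsilon_i$ consists of $x_i$ together with the estimates $\hat x_{i,l}$ for $l\neq i$,
\begin{equation*}
\dot{\hat h}_i=\frac{\partial \hat h_i}{\partial x_i}\big(f(x_i)+g(x_i)u_i\big)+\sum_{l\in\mathcal{N}^+,l\neq i}\frac{\partial \hat h_i}{\partial \hat x_{i,l}}\dot{\hat x}_{i,l}-\dot\vartheta_i .
\end{equation*}
The point is that every quantity here is locally available to agent $i$: $x_i$ is its own state, $\dot{\hat x}_{i,l}$ is given by the observer (\ref{eq:observer}), and $\dot\vartheta_i$ by (\ref{eq:rcbf}). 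The only unknown contributions, $u_l$ for $l\neq i$ and in particular for uncontrollable $l\in\mathcal{V}$, never appear, because they enter only through $e_i$, which Theorem \ref{the:1}.1) already controls. Therefore the left-hand side of (\ref{eq:rcbf_constraint}) equals $\dot{\hat h}_i$ exactly. Whenever the QP (\ref{eq:rcbfQP}) is feasible and its solution is applied, $\dot{\hat h}_i\geq-\alpha_i(\hat h_i)$.

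Second, apply a scalar comparison argument, i.e.\ Lemma \ref{lem:1} applied to the one-dimensional set $\{\hat h_i\geq 0\}$. The initial conditions chosen after (\ref{eq:rcbf}) give $\hat h_i(0)\geq 0$, consistent with Assumption \ref{ass:5}. The comparison system $\dot y=-\alpha_i(y)$ with $y(0)\geq 0$ stays nonnegative, and the comparison lemma yields $\hat h_i(t)\geq y(t)\geq 0$ for all $t\geq 0$.

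Well-posedness must be checked along the way. Theorem \ref{the:1}.1) keeps $e_i\in(0,\rho_i)$, so $\varepsilon_i$, $\chi_i$, $\dot\vartheta_i$ and $\dot{\hat r}_i$ are finite, and Proposition \ref{pro:1} makes the observer signals bounded. Assumption \ref{ass:3} and the local Lipschitzness of $g$ then make the QP data locally Lipschitz, so the QP solution is locally Lipschitz and the closed loop has unique solutions on $[0,\infty)$.

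The main obstacle is feasibility of (\ref{eq:rcbf_constraint}) for all time. This is a single affine inequality in $\nu_i$ with coefficient $\frac{\partial h(\hat\upsilon_i)}{\partial x_i}g(x_i)$, so it is feasible with $\nu_i\in\mathbb{R}^m$ whenever that coefficient is nonzero, and trivially feasible if the remaining drift terms already satisfy it. I would first invoke the ZCBF property of $h$ (Definition 1, applied at the reconstructed point $\hat\upsilon_i$), assuming, as the paper implicitly does, that $h$ has relative degree one with respect to each controllable agent's input, so that $\frac{\partial h}{\partial x_i}g(x_i)\neq 0$. Under that assumption the constraint is always solvable and the closed-form QP solution is Lipschitz.

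There is a tension with Assumption \ref{ass:2}, which bounds $\|g(x_i)u_i\|$. I would handle it by noting that Assumption \ref{ass:2} is used only for the observer bounds, and treating it as standing. With feasibility granted, the argument closes with $\hat h_i\geq 0$ for all $t\geq 0$, followed by Theorem \ref{the:1}.2).
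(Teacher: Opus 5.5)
Your argument matches the paper's: the left side of (\ref{eq:rcbf_constraint}) is exactly $\dot{\hat h}_i$, so Lemma \ref{lem:1} (a scalar comparison) keeps $\hat h_i\geq 0$ from a nonnegative start, and Theorem \ref{the:1}.2) then yields $h(x(t))>0$; the paper also exhibits an admissible initialization, $\rho_0^i=h(x(0))$ and $\vartheta_i(0)=h(\hat\upsilon_i(0))-h(x(0))/2$, which is where Assumption \ref{ass:5} is actually used and which you only gesture at. The paper never addresses QP feasibility or well-posedness, so your nonvanishing condition on $\frac{\partial h}{\partial x_i}(\hat\upsilon_i)g(x_i)$ is an extra hypothesis that the paper tacitly needs as well (the ZCBF property of $h$ alone would not supply it, since Definition 1 takes the supremum over all agents' inputs and $\hat\upsilon_i$ need not lie in $\mathcal{C}$).
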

\begin{proof}
  At initial time $t=0$, according to Assumption \ref{ass:5}, we have $h(x(0))>0$. By appropriately selecting the parameters, we can always ensure that $\hat h_i (\hat \upsilon_i(0), \vartheta _i(0) )\geq 0$ and $ 0 < e_i(0) < \rho_0^i$. As a concrete example, we can set $\rho_0^i = h(x(0))$ and $\vartheta_i(0) = h(\hat \upsilon_i(0))- h(x(0))/2$, then, it follows that $ 0 < e_i(0) =  h(x(0))/2 < h(x(0)) = \rho_0^i$ and $\hat h_i (\hat \upsilon_i(0), \vartheta _i(0) ) = h(x(0))/2 \geq 0$. Furthermore, if $\hat h_i (\hat \upsilon_i(0), \vartheta _i(0) )\geq 0$, from Lemma \ref{lem:1}, it can be known that $\hat h_i (\hat \upsilon_i, \vartheta _i ) \geq 0 $ always holds for $t \in [0, +\infty)$ with the controller (\ref{eq:rcbfQP}). Recalling Theorem \ref{the:1}, it can be inferred that $h(x(t))\geq 0$ also holds for $t \in [0, +\infty)$, which implies that the safety set $\mathcal{C}$ is forward invariant. Therefore, the safety of all agents in $\mathcal{N}$ is rigorously guaranteed.
\end{proof}
\begin{remark}
  The proposed fully distributed scheme rigorously guarantees safety by rendering $\mathcal{C}$ forward invariant (Theorem \ref{the:2}). Crucially, unlike methods \cite{Tan2025,Mestres2024} that require all agents in a coupled constraint to actively comply, our reconstructed CBF approach accommodates uncontrollable agents. As shown in Theorem \ref{the:1}, maintaining the reconstructed constraint (e.g., $\hat h_i \geq 0$) ensures the original coupled constraint $h \geq 0$ is satisfied. Therefore, controllable agents can compensate for uncertain behaviors of uncontrollable agents, requiring only a subset of agents to actively satisfy the constraints.
\end{remark}

\section{SIMULATION}
\label{sec:4}
In this section, the effectiveness of the proposed method is demonstrated by a multi-robot navigation case study. Consider an MAS consisting of four robots with $\mathcal{N}^+ = \{1,2,3,4\}$, where robot $4$ is an uncontrollable agent, while the remaining agents are controllable. The dynamics of robots are given as
\begin{align}
  \label{eq:rb}
  \begin{cases}
    \dot {\boldsymbol{p}}_i = \left[ \begin{matrix}
    \cos \theta_i & -l \sin \theta_i \\
    \sin \theta_i & l \cos \theta_i
\end{matrix}\right] u_i , \\ 
    \dot \theta_i = \omega_i , 
\end{cases} i = 1,2,3,4,
\end{align}
where ${\boldsymbol{p}}_i = [x_i,y_i]^T$ denotes the position of the center of robot $i$ and $\theta_i$ is its heading angle. $u_i = [v_i, \omega_i]^T$ is the control input of robot $i$, where $v_i$ and $\omega_i$ are the linear velocity and angular velocity of robot $i$, respectively. $l=0.036$ is the distance between the center point ${\boldsymbol{p}}_i$ and the wheel axle of robot $i$. As the subsequent robotic task depends only on ${\boldsymbol{p}}_i$, (\ref{eq:rb}) can be regarded as (\ref{eq:sys}) with $f= \boldsymbol{0}$ and $g=[\cos \theta_i, -l\sin \theta_i; \sin \theta_i, l\cos \theta_i]$. The communication topology is illustrated in Fig. \ref{fig:1}. 

\begin{figure}[!htb]
\centering
\includegraphics[width=4cm]{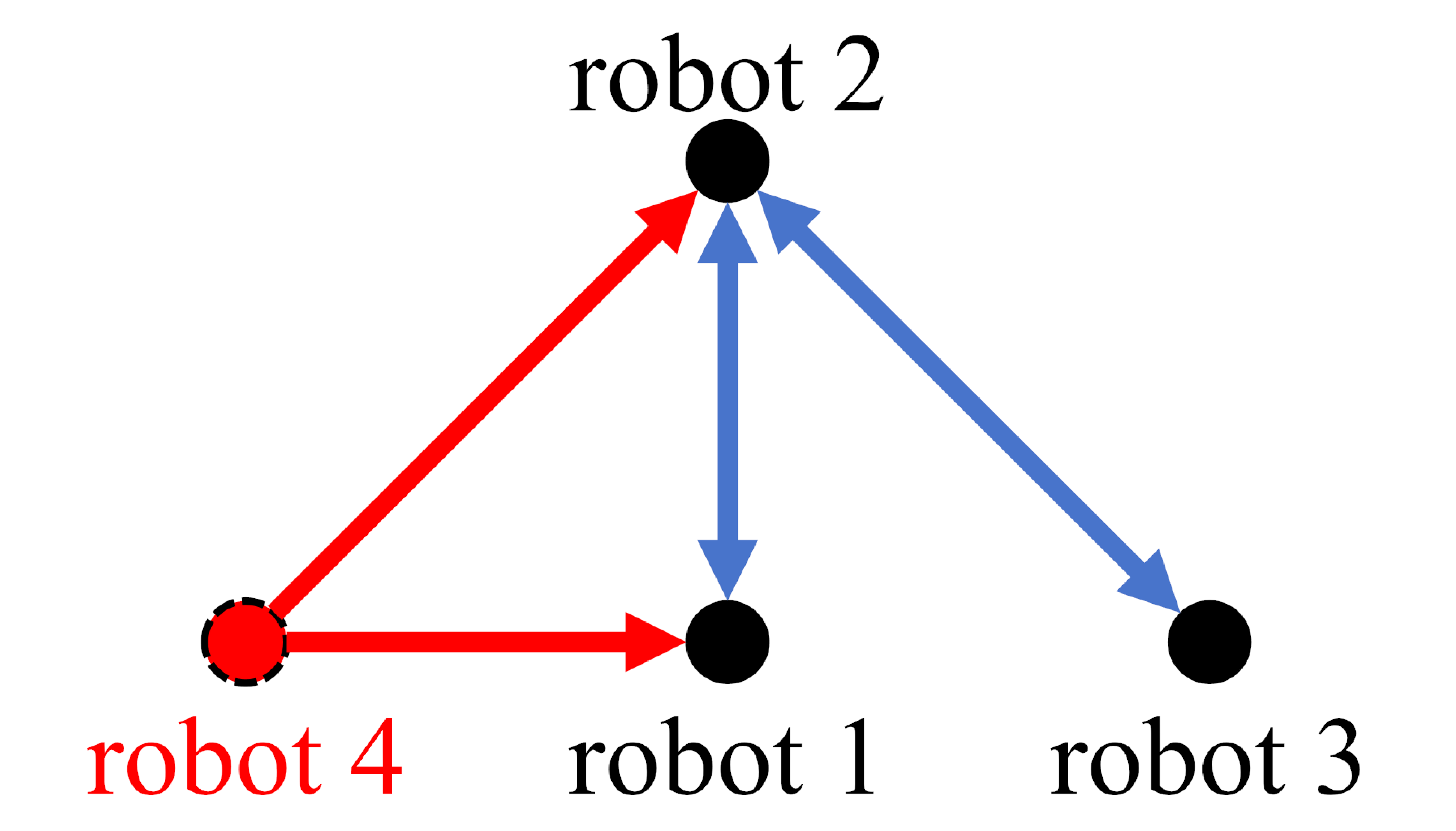}
\caption{Communication topology.}
\label{fig:1}
\end{figure}

The robots' working environment contains three circular obstacles $O_i$, $i=1,\dots,3$. The center position ${\boldsymbol{p}}_{O,i}$ and the radius $r_{O,i}$ of $O_i$ are set as: $[{\boldsymbol{p}}_{O,1}^T,r_{O,1}] = [0.8,2.5,0.5]$, $[{\boldsymbol{p}}_{O,2}^T,r_{O,2}] = [3,3.5,0.3]$, $[{\boldsymbol{p}}_{O,3}^T,r_{O,3}] = [3,1.5,0.5]$. Besides, the environment includes a goal point $G_0$ located at $\boldsymbol{p}_{G,0}=[0,4]^T$ and a circular target region $G_1$ centered at $\boldsymbol{p}_{G,1}=[1,0.5]^T$ with a radius $r_{G,1} = 0.4$. Robot $4$ is required to navigate to the goal point $G_0$ and robots $1,2$ are tasked with navigating to the goal region $G_1$. In addition, robots $i$ for $i=1,2,3$ are subject to the following constraints:
\begin{itemize}
  \item \textit{Collision avoidance}: Robots $i \in \{1,2,3\}$ must avoid obstacles and other robots, yielding the constraints: 
  \begin{align*}
    & b_{i,O_l} = ||\boldsymbol{p}_{i} - \boldsymbol{p}_{O,l}||^2 - (r_{R}+r_{O,l})^2 \geq 0, \,l\in\{1,2,3\}, \\
    & b_{i,j} = ||\boldsymbol{p}_{i} - \boldsymbol{p}_{j}||^2 - 4r_{R}^2 \geq 0,\, j\in\{1,2,3,4\}, i\neq j,
  \end{align*}
  where $r_R = 0.1$ is the collision radius of the robots. 
  \item \textit{Coupled constraints}: Robots $1$ and $2$ must cooperate, while robot $3$ follows the uncontrollable robot $4$. Both pairs must maintain a distance within $d_f=1.25$, yielding:
  \begin{align*}
    & b_{i} = d_f^2 - ||\boldsymbol{p}_{i} - \boldsymbol{p}_{j}||^2 \geq 0, \, i\neq j, \, i,j\in\{1,2\}, \\
    & b_{3} = d_f^2 - ||\boldsymbol{p}_{3} - \boldsymbol{p}_{4}||^2 \geq 0.
  \end{align*}
\end{itemize}
Then, we derive the following CBF:
\begin{equation*}
  h_i = -\frac{1}{20}\ln\left( \sum_{l=1}^{3}\mathrm{e}^{-20b_{i,O_l}} + \sum_{j\neq i, j \in \mathcal{N}^+}\mathrm{e}^{-20b_{i,j}} + \mathrm{e}^{-20b_{i}}\right).
\end{equation*}
It is deduced that $h_i \leq \min_{l\in \{1,2,3\}, j\neq i, j\in \mathcal{N}^+}\{b_{i,O_{l}},b_{i,j},b_{i}\}$. 
In the considered case study, there exists multiple heterogeneous coupled CBFs, i.e., $h_i$, $i=1,2,3$. Nevertheless, we can still reconstruct $h_i$ using (\ref{eq:rcbf}) and subsequently establish a safety controller based on (\ref{eq:rcbfQP}).

\begin{figure}[!htb]
\centering
\includegraphics[width=8cm]{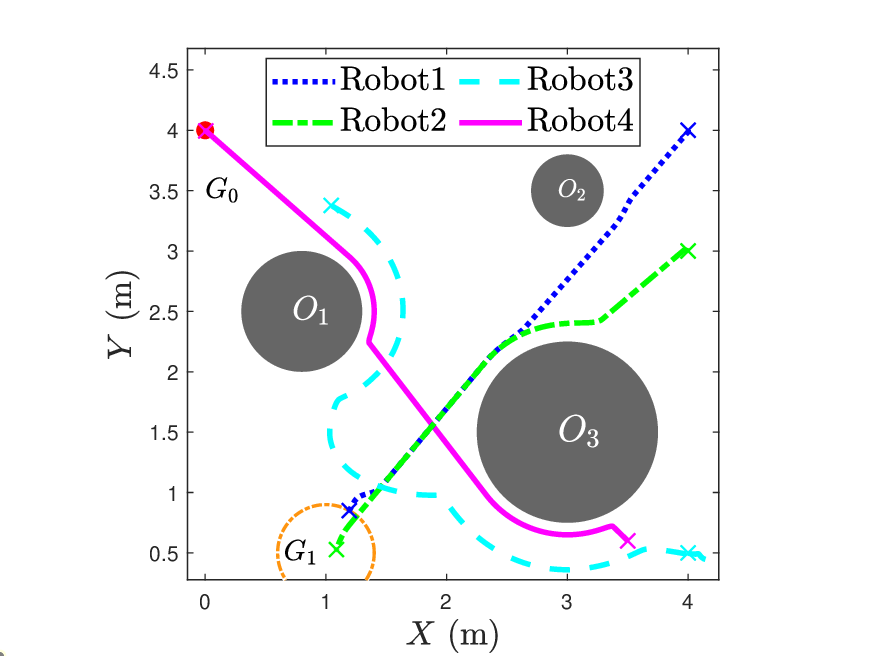}
\caption{Trajectories of 4 robots.}
\label{fig:2}
\end{figure}

\begin{figure}[!htb]
\centering
\includegraphics[width=8.87cm]{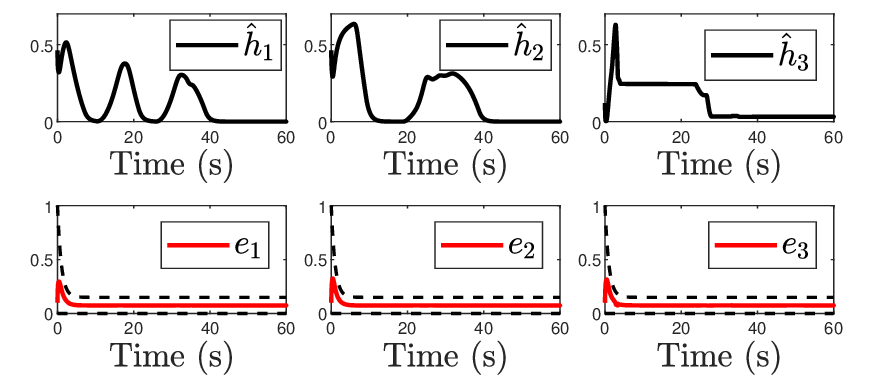}
\caption{Reconstructed CBF $\hat h_i$ and reconstruction error $e_i$, $i=1,\dots,3$.}
\label{fig:3}
\end{figure}

The initial states of the robots are set as follows: $[\boldsymbol{p}_1(0)^T,\theta_1(0)] = [4,4,0]$, $[\boldsymbol{p}_2(0)^T,\theta_2(0)] = [4,3,0]$, $[\boldsymbol{p}_3(0)^T,\theta_3(0)] = [4,0.5,-\pi]$, $[\boldsymbol{p}_4(0)^T,\theta_4(0)] = [3.5,0.6,-\pi]$. The distributed observers here are used solely for estimating $\boldsymbol{p}_i$ and the distributed observers parameters are set as: the initial estimate $\hat x_{i,j}$ is set to be identical to the actual initial state, $\hat \delta_{i,j}(0) = 2$, $\sigma_j = 0.01$, $P_j = \mathrm{diag}\{2,2\}$ for $i = 1,2,3, j \in \mathcal{N}^+$. The parameters for the reconstructed CBFs are set as follows: $\vartheta_i(0) = 0.1$, $\hat r_i(0) = 0$, $\rho_0^i = 1$, $\rho_{\infty}^i = 0.15$, $\varrho = 1$, $c_i = 0.01$, $\varsigma_i = 0.8$, $\gamma_i = 0.01$, $\epsilon_i=0.01$ for $i=1,2,3$. For QP controller, we choose $\alpha_1(\hat h_1) = \hat h_1$, $\alpha_2(\hat h_2) = \hat h_2$, $\alpha_3(\hat h_3) =\hat h_3^5 /10$, and $W_i = \mathrm{diag}\{1,l^2\}$ for $i=1,2,3$. Additionally, the constraints on the robots' maximum linear velocity and angular velocity are imposed, such that $|v_i|\leq v_{\max}$ and $|\omega_i|\leq \omega_{\max}$ for $i \in \mathcal{N}^+$, where $[v_{\max},\omega_{\max}] = [0.22,2.84]$.

\begin{figure}[!htb]
\centering
\includegraphics[width=8.87cm]{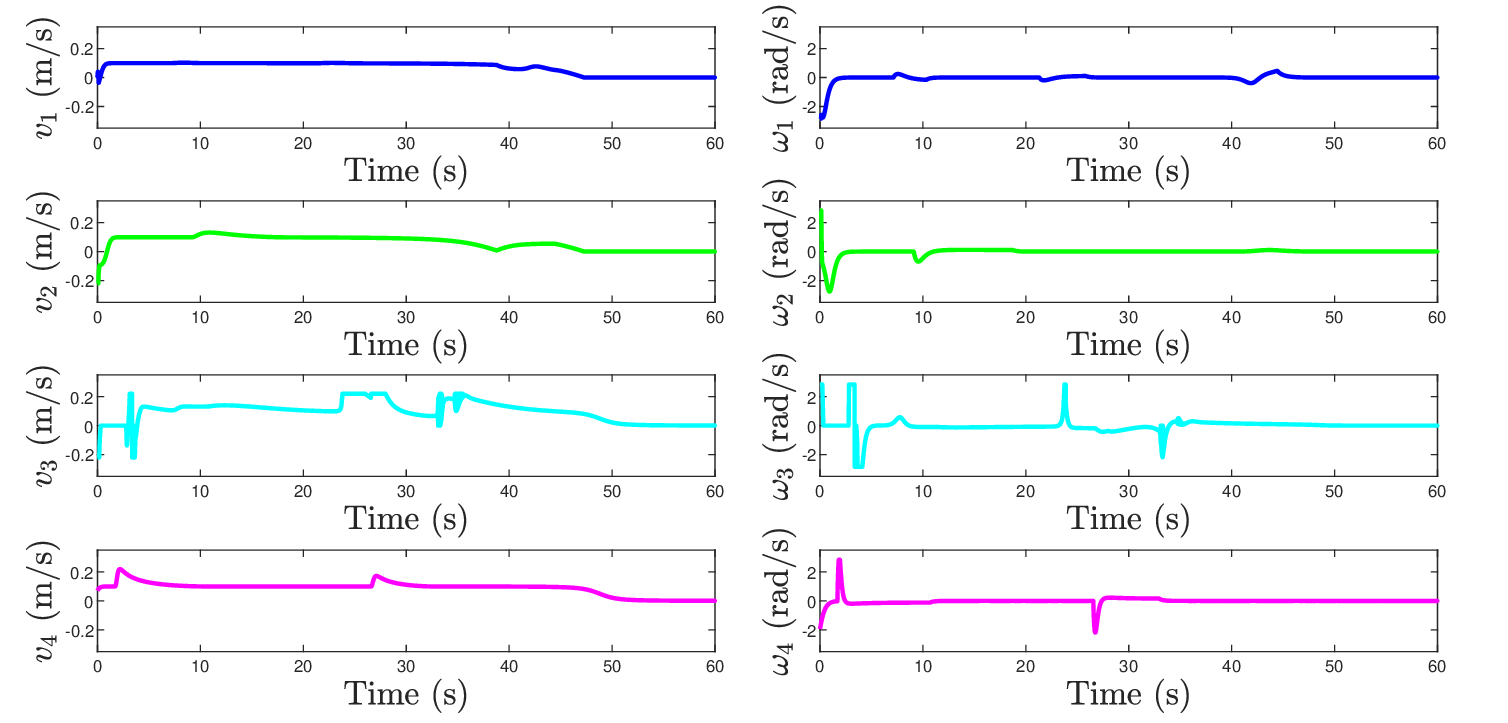}
\caption{Control input $u_i$, $i=1,\dots,4$.}
\label{fig:4}
\end{figure}

The trajectories of the robots are presented in Fig. \ref{fig:2}. Fig. \ref{fig:3} illustrates the reconstructed CBFs $\hat h_i$ and reconstruction errors $e_i$, $i = 1,2,3$, where the black dashed line represents the prescribed performance boundary. The control inputs for each robot are displayed in Fig. \ref{fig:4}. As shown in Fig. \ref{fig:3}, all reconstructed CBFs remain non-negative and the reconstruction error satisfies the prescribed performance requirements. This implies that the original constraint $h_i \geq 0$ is also satisfied. Therefore, the proposed distributed control method effectively ensures system safety.

\section{CONCLUSION}
\label{sec:5}
This paper presented a distributed safety critical control scheme based on reconstructed CBFs to address the control problem of MASs subject to coupled safety constraints. The original coupled CBF is reconstructed using state estimates obtained from a distributed adaptive observer, which enables the resulting reconstructed CBFs to be fully distributed.
The reconstruction process is modified by designing a prescribed performance adaptive parameter, such that the satisfaction of the reconstructed CBF constraints guarantees that of the original coupled one.
Based on the reconstructed CBFs, a safety QP controller is designed and we prove that this controller strictly guarantees the safety of the MAS. Compared to previous works, our proposed method is also applicable to uncertain dynamic environments containing uncontrollable agents. A simulation is conducted to demonstrate the effectiveness of the proposed method.

\addtolength{\textheight}{-12cm}   % This command serves to balance the column lengths
                                  % on the last page of the document manually. It shortens
                                  % the textheight of the last page by a suitable amount.
                                  % This command does not take effect until the next page
                                  % so it should come on the page before the last. Make
                                  % sure that you do not shorten the textheight too much.

%%%%%%%%%%%%%%%%%%%%%%%%%%%%%%%%%%%%%%%%%%%%%%%%%%%%%%%%%%%%%%%%%%%%%%%%%%%%%%%%

%%%%%%%%%%%%%%%%%%%%%%%%%%%%%%%%%%%%%%%%%%%%%%%%%%%%%%%%%%%%%%%%%%%%%%%%%%%%%%%%

%%%%%%%%%%%%%%%%%%%%%%%%%%%%%%%%%%%%%%%%%%%%%%%%%%%%%%%%%%%%%%%%%%%%%%%%%%%%%%%%
% \section*{APPENDIX}
% \section*{proof of Proposition \ref{pro:1}}

% \section*{ACKNOWLEDGMENT}

% The preferred spelling of the word  acknowledgment  in America is without an  e  after the  g . Avoid the stilted expression,  One of us (R. B. G.) thanks . . .   Instead, try  R. B. G. thanks . Put sponsor acknowledgments in the unnumbered footnote on the first page.

% %%%%%%%%%%%%%%%%%%%%%%%%%%%%%%%%%%%%%%%%%%%%%%%%%%%%%%%%%%%%%%%%%%%%%%%%%%%%%%%%

% References are important to the reader; therefore, each citation must be complete and correct. If at all possible, references should be commonly available publications.

\end{document}